\newenvironment{CenteredBox}{%
\begin{Sbox}}{
\end{Sbox}\centerline{\parbox{\wd\@Sbox}{\TheSbox}}}
\newtheorem{Def}{Definition}
\newtheorem{Prop}{Proposition}
\newtheorem{Algo}{Algorithm}
\newtheorem{Const}{Constraint}
\newtheorem{Strategy}{Strategy}
\newtheorem*{Rem*}{Remark}
\DeclareMathOperator*{\argmin}{arg\,min}
\begin{document}

\markboth{F. Luporini et al.}{An algorithm for the optimization of finite element integration loops}


\title{An algorithm for the optimization of finite element integration loops}
\author{Fabio Luporini
\affil{Imperial College London}
David A. Ham
\affil{Imperial College London}
Paul H. J. Kelly
\affil{Imperial College London}}

\begin{abstract}
We present an algorithm for the optimization of a class of finite element integration loop nests. This algorithm, which exploits fundamental mathematical properties of finite element operators, is proven to achieve a locally optimal operation count. In specified circumstances the optimum achieved is global. Extensive numerical experiments demonstrate significant performance improvements over the state of the art in finite element code generation in almost all cases. This validates the effectiveness of the algorithm presented here, and illustrates its limitations. 
\end{abstract}

\category{G.1.8}{Numerical Analysis}{Partial Differential Equations -
  Finite element methods}

\category{G.4}{Mathematical Software}{Parallel and vector implementations}

\terms{Design, Performance}

\keywords{Finite element integration, local assembly, compilers, performance optimization}

\acmformat{Fabio Luporini, David A. Ham, and Paul H. J. Kelly, 2016. An algorithm for the optimization of finite element integration loops.}


\begin{bottomstuff}

This work was supported by by the
Department of Computing at Imperial College London, the Engineering and
Physical Sciences Research Council [grant number EP/L000407/1], and the Natural
Environment Research Council [grant numbers 
NE/K008951/1 and NE/K006789/1], and by a HiPEAC collaboration
grant. The authors would like to thank Dr. Andrew T.T. McRae,
Dr. Lawrence Mitchell, and Dr. Francis Russell for their invaluable
suggestions and their contribution to the Firedrake project.

Author's addresses: Fabio Luporini $\&$ Paul H. J. Kelly, Department of Computing,
Imperial College London; David A. Ham, Department of Mathematics, Imperial College London; 
\end{bottomstuff}

\maketitle

\section{Introduction}

The need for rapid implementation of high performance, robust, and portable finite element methods has led to approaches based on automated code generation. This has been proven successful in the context of the FEniCS \cite{Fenics} and Firedrake \cite{firedrake-paper} projects. In these frameworks, the weak variational form of a problem is expressed in a high level mathematical syntax by means of the domain-specific language UFL \cite{UFL}. This mathematical specification is used by a domain-specific compiler, known as a form compiler, to generate low-level C or C++ code for the integration over a single element of the computational mesh of the variational problem's left and right hand side operators. The code for assembly operators must be carefully optimized: as the complexity of a variational form increases, in terms of number of derivatives, pre-multiplying functions, or polynomial order of the chosen function spaces, the operation count increases, with the result that assembly often accounts for a significant fraction of the overall runtime. 

As demonstrated by the substantial body of research on the topic, automating the generation of such high performance implementations poses several challenges. This is a result of the complexity inherent in the mathematical expressions involved in the numerical integration, which varies from problem to problem, and the particular structure of the loop nests enclosing the integrals. General-purpose compilers, such as those by \emph{GNU} and \emph{Intel}, fail to exploit the structure inherent in the expressions, thus producing sub-optimal code (i.e., code which performs more floating-point operations, or ``flops'', than necessary; we show this in Section~\ref{sec:perf-results}). Research compilers, for instance those based on polyhedral analysis of loop nests, such as PLUTO \cite{PLUTO}, focus on parallelization and optimization for cache locality, treating issues orthogonal to the question of minimising flops. The lack of suitable third-party tools has led to the development of a number of domain-specific code transformation (or synthesizer) systems. \citeN{quadrature1} show how automated code generation can be leveraged to introduce optimizations that a user should not be expected to write ``by hand''. \citeN{FFC-TC} and~\citeN{Francis} employ mathematical reformulations of finite element integration with the aim of minimizing the operation count. In~\citeN{Luporini}, the effects and the interplay of generalized code motion and a set of low level optimizations are analysed. It is also worth mentioning two new new form compilers, UFLACS \cite{Uflacs} and TSFC \cite{TSFC}, which particularly target the compilation time challenges of the more complex variational forms. The performance evaluation in Section~\ref{sec:perf-results} includes most of these systems.

However, in spite of such a considerable research effort, there is still no answer to one fundamental question: can we automatically generate an implementation of a form which is optimal in the number of flops executed? In this paper, we formulate an approach that solves this problem for a particular class of forms and provides very good approximations in all other cases. In particular, we will define ``local optimality'', which relates operation count with inner loops. In summary, our contributions are as follows:

\begin{itemize}
\item We formalize the class of finite element integration loop nests and we build the space of legal transformations impacting their operation count.
\item We provide an algorithm to select points in the transformation space. The algorithm uses a cost model to: (i) understand whether a transformation reduces or increases the operation count; (ii) choose between different (non-composable) transformations.
\item We demonstrate that our approach systematically leads to a local optimum. We also explain under what conditions of the input problem global optimality is achieved.
\item We integrate our approach with a compiler, COFFEE\footnote{COFFEE stands for COmpiler For Fast Expression Evaluation. The compiler is open-source and available at \url{https://github.com/coneoproject/COFFEE}}, which is in use in the Firedrake framework.
\item We experimentally evaluate using a broader suite of forms, discretizations, and code generation systems than has been used in prior research. This is essential to demonstrate that our optimality model holds in practice.
\end{itemize}

In addition, in order to place COFFEE on the same level as other code generation systems from the viewpoint of low level optimization (which is essential for a fair performance comparison):

\begin{itemize}
\item We introduce a transformation based on symbolic execution that allows irrelevant floating point operations to be skipped (for example those involving zero-valued quantities).
\end{itemize}

After reviewing basic concepts in finite element integration, in Section~\ref{sec:lnopt} we introduce a set of definitions mapping mathematical properties to the level of loop nests. This step is an essential precursor to the definition of the two algorithms -- sharing elimination (Section~\ref{sec:sharing-elimination}) and pre-evaluation (Section~\ref{sec:pre-evaluation}) -- through which we construct the space of legal transformations. The main transformation algorithm in Section~\ref{sec:optimal-synthesis} delivers the local optimality claim by using a cost model to coordinate the application of sharing elimination and pre-evaluation. We elaborate on the correctness of the methodology in Section~\ref{sec:proof}. The numerical experiments are showed in Section~\ref{sec:perf-results}. We conclude discussing the limitations of the algorithms presented and future work. 

\section{Preliminaries}
\label{sec:background}
We review finite element integration using the same notation and examples adopted in~\citeN{quadrature1} and~\citeN{Francis}. 

Consider the weak formulation of a linear variational problem:
\begin{equation}
\begin{split}
\text{Find}\ u \in U\ \text{such that} \\
a(u, v) = L(v), \forall v \in V
\end{split}
\end{equation}
where $a$ and $L$ are, respectively, a bilinear and a linear form. The set of \textit{trial} functions $U$ and the set of \textit{test} functions $V$ are suitable discrete function spaces. For simplicity, we assume $U = V$. Let $\lbrace \phi_i \rbrace$ be the set of basis functions spanning $U$. The unknown solution $u$ can be approximated as a linear combination of the basis functions $\lbrace \phi_i \rbrace$. From the solution of the following linear system it is possible to determine a set of coefficients to express $u$:
\begin{equation}
Au = b
\end{equation}
in which $A$ and $b$ discretize $a$ and $L$ respectively:
\begin{equation}
\centering
\begin{split}
A_{ij} = a(\phi_i(x), \phi_j(x)) \\
b_i = L(\phi_i(x))
\end{split}
\end{equation}
The matrix $A$ and the vector $b$ are assembled and subsequently used to solve the linear system through (typically) an iterative method.

We focus on the assembly phase, which is often characterized as a two-step procedure: \textit{local} and \textit{global} assembly. Local assembly is the subject of this article. It consists of computing the contributions of a single element in the discretized domain to the equation's approximated solution. During global assembly, these local contributions are coupled by suitably inserting them into $A$ and $b$. 

We illustrate local assembly in a concrete example, the evaluation of the local element matrix for a Laplacian operator. Consider the weighted Poisson equation:
\begin{equation}
- \nabla \cdot (w \nabla u) = 0
\end{equation}
in which $u$ is unknown, while $w$ is prescribed. The bilinear form associated with the weak variational form of the equation is:
\begin{equation}
a(v, u) = \int_\Omega w \nabla v \cdot \nabla u\ \mathrm{d}x
\end{equation}
The domain $\Omega$ of the equation is partitioned into a set of cells (elements) $T$ such that $\bigcup T = \Omega$ and $\bigcap T = \emptyset$. By defining $\lbrace \phi_i^K \rbrace$ as the set of basis functions with support on the element $K$ (i.e. those which do not vanish on this element), we can express the local element matrix as
\begin{equation}
\label{stiffness}
A_{ij}^K = \int_K w \nabla \phi_i^K \cdot \nabla \phi_j^K\ \mathrm{d}x
\end{equation}
The local element vector $L$ can be determined in an analogous way. 

\subsection{Monomials}
\label{sec:monomials}
It has been shown (for example in~\citeN{Kirby:TC}) that local element tensors can be expressed as a sum of integrals over $K$, each integral being the product of derivatives of functions from sets of discrete spaces and, possibly, functions of some spatially varying coefficients. An integral of this form is called \textit{monomial}.

\subsection{Quadrature mode}
Quadrature schemes are typically used to numerically evaluate $A_{ij}^K$. For convenience, a reference element $K_0$ and an affine mapping $F_K : K_0 \rightarrow K$ to any element $K \in T$ are introduced. This implies that a change of variables from reference coordinates $X$ to real coordinates $x = F_K (X)$ is necessary any time a new element is evaluated. The basis functions $\{\phi_i^K\}$ are then replaced with local basis functions $\{\Phi_i\}$ such that $\Phi_i(X) = \phi_i^k(F_K(X)) = \phi_i^k(x) $. The numerical integration of \eqref{stiffness} over an element $K$ can then be expressed as follows:
\begin{equation}
\label{eq:quadrature}
A_{ij}^K = \sum_{q=1}^N \sum_{\alpha_3=1}^n \Phi_{\alpha_3}(X^q)w_{\alpha_3} \sum_{\alpha_1=1}^d \sum_{\alpha_2=1}^d \sum_{\beta=1}^d \frac{\partial X_{\alpha_1}}{\partial x_{\beta}} \frac{\partial \Phi_i(X^q)}{\partial X_{\alpha_1}} \frac{\partial X_{\alpha_2}}{\partial x_{\beta}} \frac{\partial \Phi_j(X^q)}{\partial X_{\alpha_2}}  \det (F_K')W^q
\end{equation}
where $N$ is the number of integration points, $W^q$ the quadrature weight at the integration point $X^q$, $d$ the dimension of $\Omega$, $n$ the number of degrees of freedom associated to the local basis functions, and $\det (F_K')$ the determinant of the Jacobian of the aforementioned change of coordinates.

\subsection{Tensor contraction mode}
\label{sec:tc}
By exploiting the linearity, associativity and distributivity of the
relevant mathematical operators, we can rewrite \eqref{eq:quadrature} as
\begin{equation}
\label{eq:tensor}
A_{ij}^K = \sum_{\alpha_1=1}^d \sum_{\alpha_2=1}^d \sum_{\alpha_3=1}^n \left(\det (F_K') w_{\alpha_3} \sum_{\beta=1}^d \frac{X_{\alpha_1}}{\partial x_{\beta}} \frac{\partial X_{\alpha_2}}{\partial x_{\beta}} \left(\sum_{q=1}^{N} \Phi_{\alpha_3} \frac{\partial \Phi_{i_1}}{\partial X_{\alpha_1}} \frac{\partial \Phi_{i_2}}{\partial X_{\alpha_2}}W^q\right)\right).
\end{equation}
A generalization of this transformation was introduced in~\cite{Kirby:TC}. Since it only involves reference element terms, the quadrature sum can be pre-evaluated and reused for each element. The evaluation of the local tensor can then be abstracted as
\begin{equation}
A_{ij}^K = \sum_{\alpha} A_{i_1 i_2 \alpha}^0 G_{K}^\alpha
\end{equation}
in which the pre-evaluated \emph{reference tensor}, $A_{i_1 i_2 \alpha}$, and the cell-dependent \emph{geometry tensor}, $G_{K}^\alpha$, are exposed. 

\subsection{Qualitative comparison}
\label{sec:qualitative}
Depending on form and discretization, the relative performance of the two modes, in terms of the operation count, can vary quite dramatically. The presence of derivatives or coefficient functions in the input form increases the rank of the geometry tensor, making the traditional quadrature mode preferable for sufficiently complex forms. On the other hand, speed-ups from adopting tensor mode can be significant in a wide class of forms in which the geometry tensor remains sufficiently small. The discretization, particularly the polynomial order of trial, test, and coefficient functions, also plays a key role in the resulting operation count. 

These two modes are implemented in the FEniCS Form Compiler \cite{FFC-TC}. In this compiler, a heuristic is used to choose the most suitable mode for a given form. It consists of analysing each monomial in the form, counting the number of derivatives and coefficient functions, and checking if this number is greater than a constant found empirically \cite{Fenics}. We will return to the efficacy of this approach in section \ref{sec:perf-results}. One of the objectives of this paper is to produce a system that goes beyond the dichotomy between quadrature and tensor modes. We will reason in terms of loop nests, code motion, and code pre-evaluation, searching the entire implementation space for an optimal synthesis.  

\section{Transformation Space}
\label{sec:optimal-impl}
In this section, we characterize global and local optimality for finite element integration as well as the space of legal transformations that needs be explored to achieve them. The method by which exploration is performed is discussed in Section~\ref{sec:optimal-synthesis}. 

\subsection{Loop nests, expressions and optimality}
\label{sec:lnopt}
In order to make the article self-contained, we start with reviewing basic compiler terminology.

\begin{Def}[Perfect and imperfect loop nests]
A perfect loop nest is a loop whose body either 1) comprises only a sequence
of non-loop statements or 2) is itself a perfect loop nest. If this
condition does not hold, a loop nest is said to be imperfect. 
\end{Def}

\begin{Def}[Independent basic block]
An independent basic block is a sequence of statements such that no data
dependencies exist between statements in the block.
\end{Def}

We focus on perfect nests whose innermost loop body is an independent basic
block. A straightforward property of this class is that hoisting invariant
expressions from the innermost to any of the outer loops or the preheader
(i.e., the block that precedes the entry point of the nest) is always safe,
as long as any dependencies on loop indices are honored. We will make use of this property. The results of this section could also be generalized to larger classes of loop nests, in which basic block independence does not hold, although this would require refinements beyond the scope of this paper. 

By mapping mathematical properties to the loop nest level, we introduce the
concepts of a \textit{linear loop} and, more generally, a (perfect) multilinear loop nest.

\begin{Def}[Linear loop]
\label{def:linear-loop}
A loop $L$ defining the iteration space $I$ through the iteration variable $i$, or simply $L_i$, is linear if in its body
\begin{enumerate}
\item $i$ appears only as an array index, and
\item whenever an array $a$ is indexed by $i$ ($a[i]$), all expressions in which this appears are affine in $a[i]$.
\end{enumerate}
\end{Def}

\begin{Def}[Multilinear loop nest]
\label{def:multi-linear-loop}
A multilinear loop nest of arity $n$ is a perfect nest composed of $n$ loops, in which all of the expressions appearing in the body of the innermost loop are affine in each loop $L_i$ separately.
\end{Def}

We will show that multilinear loop nests, which arise naturally when translating bilinear or linear forms into code, are important because they have a structure that we can take advantage of to reach a local optimum.

We define two other classes of loops. 

\begin{Def}[Reduction loop]
\label{def:i-loop}
A loop $L_i$ is said to be a reduction loop if in its body
\begin{enumerate}
\item $i$ appears only as an array index, and
\item for each augmented assignment statement $S$ (e.g., an increment), arrays indexed by $i$ appear only on the right hand side of $S$.
\end{enumerate}
\end{Def}

\begin{Def}[Order-free loop]
\label{def:e-loop}
A loop $L_i$ is said to be an order-free loop if its iterations can be executed in any arbitrary order. 
\end{Def}

\begin{figure}\begin{CenteredBox}
\lstinputlisting[basicstyle=\footnotesize\ttfamily]{listings/loopnest.code}
\end{CenteredBox}\caption{The loop nest implementing a generic bilinear form.}\label{code:loopnest}\end{figure}

Consider Equation~\ref{eq:quadrature} and the (abstract) loop nest implementing it illustrated in Figure~\ref{code:loopnest}. The imperfect nest $\Lambda=[L_e, L_i, L_j, L_k]$ comprises an order-free loop $L_e$ (over elements in the mesh), a reduction loop $L_i$ (performing numerical integration), and a multilinear loop nest $[L_j, L_k]$ (over test and trial functions). In the body of $L_k$, one or more statements evaluate the local tensor for the element $e$. Expressions (the right hand side of a statement) result from the translation of a form in high level matrix notation into code. In particular, $m$ is the number of monomials (a form is a sum of monomials), $\alpha_{eij}$ ($\beta_{eik}$) represents the product of a coefficient function (e.g., the inverse Jacobian matrix for the change of coordinates) with test  or trial functions, and $\sigma_{ei}$ is a function of coefficients and geometry. We do not pose any restrictions on function spaces (e.g., scalar- or vector-valued), coefficient expressions (linear or non-linear), differential and vector operators, so $\sigma_{ei}$ can be arbitrarily complex. We say that such an expression is in \textit{normal form}, because the algebraic structure of a variational form is intact: products have not yet been expanded, distinct monomials can still be identified, and so on. This brings us to formalize the class of loop nests that we aim to optimize.

\begin{Def}[Finite element integration loop nest]
\label{def:fem-loopnest}
A finite element integration loop nest is a loop nest in which the following appear, in order: an imperfect order-free loop, an imperfect (perfect only in some special cases), linear or non-linear reduction loop, and a multilinear loop nest whose body is an independent basic block in which expressions are in normal form.
\end{Def}

We then characterize optimality for a finite element integration loop nest as follows.

\begin{Def}[Optimality of a loop nest]
\label{def:mln-optimality}
Let $\Lambda$ be a generic loop nest, and let $\Gamma$ be a transformation function $\Gamma : \Lambda \rightarrow \Lambda'$ such that $\Lambda'$ is semantically equivalent to $\Lambda$ (possibly, $\Lambda' = \Lambda$). We say that $\Lambda' = \Gamma (\Lambda)$ is an optimal synthesis of $\Lambda$ if the total number of operations (additions, products) performed to evaluate the result is minimal.
\end{Def}

The concept of local optimality, which relies on the particular class of \textit{flop-decreasing} transformations, is also introduced.

\begin{Def}[Flop-decreasing transformation]
A transformation which reduces the operation count is called flop-decreasing.
\end{Def}

\begin{Def}[Local optimality of a loop nest]
\label{def:mln-quasi-optimality}
Given $\Lambda$, $\Lambda'$ and $\Gamma$ as in Definition~\ref{def:mln-optimality}, we say that $\Lambda' = \Gamma (\Lambda)$ is a locally optimal synthesis of $\Lambda$ if:
\begin{itemize}
\item the number of operations (additions, products) in the innermost loops performed to evaluate the result is minimal, and
\item $\Gamma$ is expressed as composition of flop-decreasing transformations.
\end{itemize}
\end{Def}

The restriction to flop-decreasing transformations aims to exclude those apparent optimizations that, to achieve flop-optimal innermost loops, would rearrange the computation at the level of the outer loops causing, in fact, a global increase in operation count. 

We also observe that Definitions~\ref{def:mln-optimality} and~\ref{def:mln-quasi-optimality} do not take into account memory requirements. If the execution of loop nest were memory-bound -- the ratio of operations to bytes transferred from memory to the CPU being too low -- then optimizing the number of flops would be fruitless. Henceforth we assume we operate in a CPU-bound regime, evaluating arithmetic-intensive expressions. In the context of finite elements, this is often true for more complex multilinear forms and/or higher order elements. 

Achieving optimality in polynomial time is not generally feasible, since the $\sigma_{ei}$ sub-expressions can be arbitrarily unstructured. However, multilinearity results in a certain degree of regularity in $\alpha_{eij}$ and $\beta_{eik}$. In the following sections, we will elaborate on these observations and formulate an approach that achieves: (i) at least a local optimum in all cases; (ii) global optimality whenever the monomials are ``sufficiently structured''. To this purpose, we will construct:
\begin{itemize}
\item the space of legal transformations impacting the operation count (Sections~\ref{sec:sharing-elimination} -- \ref{sec:mem-const})
\item an algorithm to select points in the transformation space (Section~\ref{sec:optimal-synthesis})
\end{itemize}

\subsection{Sharing elimination}
\label{sec:sharing-elimination}
We start with introducing the fundamental notion of sharing.

\begin{Def}[Sharing]
A statement within a loop nest $\Lambda$ presents sharing if at least one of the following conditions hold:
\begin{description}
\item[Spatial sharing] There are at least two symbolically identical sub-expressions
\item[Temporal sharing] There is at least one non-trivial sub-expression (e.g., an addition or a product) that is redundantly executed because it is independent of $\lbrace L_{i_1}, L_{i_1}, ...L_{i_n} \rbrace \subset \Lambda$.
\end{description}
\end{Def}

To illustrate the definition, we show in Figure~\ref{code:multi_loopnest} how sharing evolves as factorization and code motion are applied to a trivial multilinear loop nest. In the original loop nest (Figure~\ref{code:multi_loopnest_a}), spatial sharing is induced by the symbol $b_j$. Factorization eliminates spatial sharing and creates temporal sharing (Figure~\ref{code:multi_loopnest_b}). Finally, generalized code motion \cite{Luporini}, which hoists sub-expressions that are redundantly executed by at least one loop in the nest\footnote{Traditional loop-invariant code motion, which is commonly applied by general-purpose compilers, only checks invariance with respect to the innermost loop.}, leads to optimality (Figure~\ref{code:multi_loopnest_c}). 

\begin{figure}[h]\begin{CenteredBox}
{\subfigcapskip = 13pt \subfigure[With spatial sharing]{\label{code:multi_loopnest_a}\lstinputlisting[basicstyle=\footnotesize\ttfamily]{listings/multilinear_loopnest.code}}}
~~~~~
{\subfigcapskip = 13pt \subfigure[With temporal sharing]{\label{code:multi_loopnest_b}\lstinputlisting[basicstyle=\footnotesize\ttfamily]{listings/multilinear_loopnest_int.code}}}
~~~~~
{\subfigcapskip = 4pt \subfigure[Optimal form]{\label{code:multi_loopnest_c}\lstinputlisting[basicstyle=\footnotesize\ttfamily]{listings/multilinear_loopnest_opt.code}}}
\end{CenteredBox}\caption{Reducing a simple multilinear loop nest to optimal form.}\label{code:multi_loopnest}\end{figure}

In this section, we study \textit{sharing elimination}, a transformation that aims to reduce the operation count by removing sharing through the application of expansion, factorization, and generalized code motion. If the objective were reaching optimality and the expressions lacked structure, a transformation of this sort would require solving a large combinatorial problem -- for instance to evaluate the impact of all possible factorizations. Our sharing elimination strategy, instead, exploits the structure inherent in finite element integration expressions to guarantee, after coordination with other transformations (an aspect which we discuss in the following sections), local optimality.  Global optimality is achieved if stronger preconditions hold. Setting local optimality, rather than optimality, as primary goal is essential to produce simple and computationally efficient algorithms -- two necessary conditions for integration with a compiler.

\subsubsection{Identification and exploitation of structure}
\label{sec:se-rln}
Finite element expressions can be seen as composition of operations between tensors. Often, the optimal implementation strategy for these operations is to be determined out of two alternatives. For instance, consider $J^{-T} \nabla v \cdot J^{-T} \nabla v$, with $J^{-T}$ being the transposed inverse Jacobian matrix for the change of (two-dimensional) coordinates, and $v$ a generic two-dimensional vector. The tensor operation will reduce to the scalar expression $(a v^0_i + b v^1_i) (a v^0_i + b v^1_i) + ...$, in which $v^0_i$ and $v^1_i$ represent components of $v$ that depend on $L_i$. To minimize the operation count for expressions of this kind, we have two options:
\begin{Strategy}
\label{strategy:i}
Eliminating temporal sharing through generalized code motion.
\end{Strategy}
\begin{Strategy}
\label{strategy:ii}
Eliminating spatial sharing first -- through product expansion and factorization -- and temporal sharing afterwards, again through generalized code motion.
\end{Strategy}
In the current example, we observe that, depending on the size of $L_i$, applying Strategy~\ref{strategy:ii} could reduce the operation count since the expression would be recast as $v^0_i v^0_i a a + v^0_i v^1_i (ab + ab) + v^1_i v^1_i c c + ...$ and some hoistable sub-expressions would be exposed. On the other hand, Strategy~\ref{strategy:i} would have no effect as $v$ only depends on a single loop, $L_i$. In general, the choice between the two strategies depends on multiple factors: the loop sizes, the increase in operation count due to expansion (in Strategy~\ref{strategy:ii}), and the gain due to code motion. A second application of Strategy~\ref{strategy:ii} was provided in Figure~\ref{code:multi_loopnest}. These examples motivate the introduction of a particular class of expressions, for which the two strategies assume notable importance.
\begin{Def}[Structured expression]
\label{def:struct-expr}
We say that an expression is ``structured along a loop nest $\Lambda$'' if and only if,  for every symbol $s_{\Lambda}$ depending on at least one loop in $\Lambda$, the spatial sharing of $s_{\Lambda}$ may be eliminated by factorizing all occurrences of $s_{\Lambda}$ in the expression.
\end{Def}
\begin{Prop}
\label{prop:multi-struct}
An expression along a multilinear loop nest is structured.
\end{Prop}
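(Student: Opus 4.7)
The plan is to directly unfold the definition of a multilinear loop nest and show that affineness in each loop variable separately implies that any symbol depending on some loop may be completely pulled out of the expression by a single factorization step. Fix a multilinear nest $\Lambda = [L_{i_1},\dots,L_{i_n}]$, an expression $E$ in its innermost body, and any symbol $s_{\Lambda}$ that depends on at least one loop in $\Lambda$, say $L_{i_k}$. The goal is to rewrite $E$ in the form $s_{\Lambda} \cdot A + B$, where $A$ and $B$ contain no occurrence of $s_{\Lambda}$; once this is done, $s_{\Lambda}$ appears exactly once, so its spatial sharing is eliminated by hypothesis of Definition~\ref{def:struct-expr}.

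First I would invoke Definition~\ref{def:multi-linear-loop}, which says that $E$ is affine in $L_{i_k}$ separately. Combined with condition (2) of Definition~\ref{def:linear-loop}, this means that every sub-expression containing an array indexed by $i_k$ is affine in that array entry. In particular, looking at $E$ as a polynomial in the set of symbols indexed by $i_k$, every such symbol (including $s_{\Lambda}$) occurs to at most the first power in every monomial. The standard distributivity-based grouping of all monomials containing $s_{\Lambda}$ therefore produces the canonical form $E = s_{\Lambda} \cdot A + B$, which is precisely the statement that all occurrences of $s_{\Lambda}$ have been factorized simultaneously.

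The only subtle case, which I expect to be the main obstacle, is when $s_{\Lambda}$ depends on more than one loop in $\Lambda$ at once (for instance, an array entry $a[i_k][i_l]$ that is simultaneously indexed by two loops of the nest). A priori one might worry that $s_{\Lambda}$ could appear multiplied by itself through two different indexings and thus fail to factorize. Here I would argue separately that multilinearity gives affineness in $L_{i_k}$ \emph{and} in $L_{i_l}$, so in the fully expanded form of $E$ no monomial contains two factors sharing the index $i_k$ (by affineness in $L_{i_k}$); hence a single occurrence of the specific symbol $s_{\Lambda}$ appears per monomial, and the same grouping argument applies. This, together with the generic case, establishes the proposition.

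No additional machinery beyond Definitions~\ref{def:linear-loop}, \ref{def:multi-linear-loop} and~\ref{def:struct-expr} is needed; the proof is essentially a rephrasing of multilinearity as a factorization property.
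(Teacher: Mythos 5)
Your proposal is correct and follows essentially the same route as the paper's (much terser) proof: both arguments reduce to the observation that Definitions~\ref{def:linear-loop} and~\ref{def:multi-linear-loop} force each loop-dependent symbol to occur at most once per summand, so distributivity collects all its occurrences into a single factor. Your extra discussion of symbols indexed by several loops at once is a reasonable elaboration of the same point rather than a different argument.
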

\begin{proof}
This follows directly from Definition~\ref{def:linear-loop} and Definition~\ref{def:multi-linear-loop}, which essentially restrict the number of occurrences of a symbol $s_{\Lambda}$ in a summand to at most 1.
\end{proof}
If $\Lambda$ were an arbitrary loop nest, a given symbol $s_{\Lambda}$ could appear everywhere (e.g., $n$ times in a summand and $m$ times in another summand with $n \neq m$, as argument of a higher level function, in the denominator of a division), thus posing the challenge of finding the factorization that maximizes temporal sharing. If $\Lambda$ is instead a finite element integration loop nest, thanks to Proposition~\ref{prop:multi-struct} the space of flop-decreasing transformations is constructed by ``composition'' of Strategy~\ref{strategy:i} and Strategy~\ref{strategy:ii}, as illustrated in Algorithm~\ref{algo:sharing-elimination}.

Finally, we observe that the $\sigma_{ei}$ sub-expressions can sometimes be considered ``weakly structured''. This happens when a relaxed version of Definition~\ref{def:struct-expr} applies, in which the factorization of $s_{\Lambda}$ only ``minimizes'' (rather than ``eliminates'') spatial sharing (for instance, in the complex hyperelastic model analyzed in Section~\ref{sec:perf-results}). Weak structure will be exploited by Algorithm~\ref{algo:sharing-elimination} in the attempt to achieve optimality.

\subsubsection{Algorithm}
\label{sec:se-algo}
Algorithm~\ref{algo:sharing-elimination} describes sharing elimination assuming as input a tree representation of the loop nest. It makes use of the following notation and terminology:
\begin{itemize}
\item \textit{multilinear operand}: any $\alpha_{eij}$ or $\beta_{eik}$ in the input expression.
\item \textit{multilinear symbol}: a symbol appearing within a multilinear operand depending on $L_j$ or $L_k$ (e.g., test functions, first order derivatives of test functions, etc.).
\end{itemize}
Examples will be provided in Section~\ref{sec:se-examples}.

\noindent\rule[0.01ex]{\linewidth}{0.7pt}

\begin{Algo}[Sharing elimination]
\label{algo:sharing-elimination}
\normalfont 
The input of the algorithm is a tree representation a finite element integration loop nest.
\begin{enumerate}
\item Perform a depth-first visit of the loop tree to collect and partition multilinear operands into disjoint sets, $\mathbb{P} = \lbrace P_1, ..., P_p\rbrace$. $\mathbb{P}$ is such that all multilinear operands in each $P \in \mathbb{P}$ share the same set of multilinear symbols $S_P$, whereas there is no sharing across different partitions. For all multilinear operands in $P \in \mathbb{P}$ such that $|P| \leq |S_P|$, apply Strategy~\ref{strategy:i}. \\
\textit{Note: as a consequence of Proposition~\ref{prop:multi-struct}, $|P|$ and $|S_P|$ represent the number of products in the innermost loop induced by $P$ if Strategy~\ref{strategy:i} or Strategy~\ref{strategy:ii} were applied}

\item For each sub-expression $\operatorname{expr}$ depending on exactly one linear loop, collect the multilinear symbols and the temporaries produced at step (1). Partition them into disjoint sets, $\mathbb{T} = \lbrace T_1, ..., T_t\rbrace$, such that $T_i$ includes all instances of a given symbol in $\operatorname{expr}$. Apply Strategy~\ref{strategy:ii} factorizing the symbols in each $T_i$, provided that this leads to a reduction in operation count; otherwise, apply Strategy~\ref{strategy:i} \\
\textit{Note: the last check ensures the flop-decreasing nature of the transformation. In the cases in which expansion outweighs code motion, Strategy~\ref{strategy:i} is preferred.}\\
\textit{Note: the expansion cost is a function of the products wrapping a symbol (how many of them and their arity), so it can be determined through tree visits.}

\item Build the \textit{sharing graph} $G = (S, E)$. Each $s \in S$ represents a multilinear symbol or a temporary produced by the previous steps. An edge $(s_i$, $s_j)$ indicates that a product $s_i s_j$ would appear if the sub-expressions including $s_i$ and $s_j$ were expanded.\\
\textit{Note: the following steps will only impact bilinear forms, since otherwise $E = \emptyset$.}

\item Partition $S$ into disjoint sets, $\mathbb{S} = \lbrace S_1, ..., S_n\rbrace$, such that $S_i$ includes all instances of a given symbol $s$ in the expression. Transform $G$ by merging $\lbrace s_1, ..., s_m \rbrace \subset S_i$ into a unique vertex $s$ (taking the union of the edges), provided that factorizing $[s_1, ..., s_m]$ would not cause an increase in operation count.

\item Map $G$ to an Integer Linear Programming (ILP) model for determining how to optimally apply Strategy~\ref{strategy:ii}. The solution is the set of symbols that will be factorized by Strategy~\ref{strategy:ii}. Let $|S| = n$; the ILP model then is as follows:
\begin{gather*}
x_i \text{: a vertex in } S \text{ (1 if a symbol should be factorized, 0 otherwise)}\\
y_{ij} \text{: an edge in } E \text{ (1 if } s_i \text{ is factorized in the product } s_i s_j \text{ , 0 otherwise)}\\
n_i \text{: the number of edges incident to } x_i \\
\begin{align*}
\min \sum_{i=1}^{n} x_i,\ s.t. ~~~~~~~~~&\sum_{j|(i,j) \in E} y_{ij} \leq n_i x_i,\ i = 1, ..., n \\
& y_{ij} + y_{ji} = 1,\ (i, j) \in E
\end{align*}
\phantom{\hspace{6cm}}
\end{gather*}

\item Perform a depth-first visit of the loop tree and, for each yet unhandled or hoisted expression, apply the most profitable between Strategy~\ref{strategy:i} and Strategy~\ref{strategy:ii}. \\
\textit{Note: this pass speculatively assumes that expressions are (weakly) structured along the reduction loop. If the assumption does not hold, the operation count will generally be sub-optimal because only a subset of factorizations and code motion opportunities may eventually be considered.}
\end{enumerate}
\end{Algo}

\noindent\rule[1.0ex]{\linewidth}{0.7pt}

Although the primary goal of Algorithm~\ref{algo:sharing-elimination} is operation count minimization within the multilinear loop nest, the enforcement of flop-decreasing transformations (steps (2) and (4)) and the re-scheduling of sub-expressions within outer loops (last step) also attempt to optimize the loop nest globally. We will further elaborate this aspect in Section~\ref{sec:proof}.

\subsubsection{Examples}
\label{sec:se-examples}
Consider again Figure~\ref{code:multi_loopnest_a}. We have $\mathbb{P} = \lbrace P_0, P_1, P_2\rbrace$, with $P_0 = \lbrace b_j\rbrace$, $P_1 = \lbrace c_i\rbrace$, and $P_2 = \lbrace d_i\rbrace$. For all $P_i$, we have $|P_i| = 1 = |S_{P_i}|$, although applying Strategy~\ref{strategy:i} in step (1) has no effect. The sharing graph is $G = (\lbrace b_j, c_i, d_i \rbrace, \lbrace (b_j, c_i), (b_j, d_i) \rbrace$, and $T = \lbrace c_i, d_i \rbrace$. The ILP formulation leads to the code in Figure~\ref{code:multi_loopnest_c}.

In Figure~\ref{code:poisson}, Algorithm~\ref{algo:sharing-elimination} is executed in a very simple realistic scenario, which originates from the bilinear form of a Poisson equation in two dimensions. We observe that $\mathbb{P} = \lbrace P_0, P_1\rbrace$, with $P_0 = \lbrace (z_0 a_{ik} + z_2 b_{ik}), (z_1 a_{ik} + z_3 b_{ik})\rbrace$ and $P_1 = \lbrace (z_0 a_{ij} + z_2 b_{ij}), (z_1 a_{ij} + z_3 b_{ij})\rbrace$. In addition, $|P_i| = |S_{P_i}| = 2$, so Strategy~\ref{strategy:i} is applied to both partitions (step (1)). We then have (step (3)) $G = (\lbrace t_0, t_1, t_2, t_3 \rbrace, \lbrace (t_0, t_2), (t_1, t_3) \rbrace)$. Since there are no more factorization opportunities, the ILP formulation becomes irrelevant.

\begin{figure}[h]\begin{CenteredBox}
{\subfigcapskip = 7pt \subfigure[Normal form]{\label{code:poisson_a}\lstinputlisting[basicstyle=\footnotesize\ttfamily]{listings/poisson_pre_se.code}}}
~~~~~~~~~~
{\subfigcapskip = 7pt \subfigure[After sharing elimination]{\label{code:poisson_b}\lstinputlisting[basicstyle=\footnotesize\ttfamily]{listings/poisson_after_se.code}}}
\end{CenteredBox}\caption{Applying sharing elimination to the bilinear form arising from a Poisson equation in 2D. The operation counts are $E(f(z_0, z_1, ...) + IJK \cdot 18)$ (left) and $E(f(z_0, z_1, ...) + I(J \cdot 6 + K \cdot 9 + JK \cdot 4)$ (right), with $f(z_0, z_1, ...)$ representing the operation count for evaluating $z_0, z_1, ...$, and common sub-expressions being counted once. The synthesis in Figure~\ref{code:poisson_b} is globally optimal apart from the pathological case  $I, J, K = 1$.}\label{code:poisson}\end{figure}

For reasons of space, further examples, including the hyperelastic model evaluated in Section~\ref{sec:perf-results} and other non-trivial ILP instances, are made available online\footnote{Sharing elimination examples: \url{https://gist.github.com/FabioLuporini/14e79457d6b15823c1cd}}.

\subsection{Pre-evaluation of reductions}
\label{sec:pre-evaluation}
Sharing elimination uses three operators: expansion, factorization, and code motion. In this section, we discuss the role and legality of a fourth operator: reduction pre-evaluation. We will see that what makes this operator special is the fact that there exists a single point in the transformation space of a monomial (i.e., a specific factorization of test, trial, and coefficient functions) ensuring its correctness.

We start with an example. Consider again the loop nest and the expression in Figure~\ref{code:loopnest}. We pose the following question: are we able to identify sub-expressions for which the reduction induced by $L_i$ can be pre-evaluated, thus obtaining a decrease in operation count proportional to the size of $L_i$, $I$? The transformation we look for is exemplified in Figure~\ref{code:loopnest_rednored} with a simple loop nest. The reader may verify that a similar transformation is applicable to the example in Figure~\ref{code:poisson_a}.

\begin{figure}[h]\begin{CenteredBox}
{\subfigcapskip = 19pt \subfigure[With reduction]{\label{code:loopnest_red}\lstinputlisting[basicstyle=\footnotesize\ttfamily]{listings/loopnest_red.code}}}
~~~~~~~~~~
{\subfigcapskip = 5pt \subfigure[After pre-evaluation]{\label{code:loopnest_nored}\lstinputlisting[basicstyle=\footnotesize\ttfamily]{listings/loopnest_nored.code}}}
\end{CenteredBox}\caption{Exposing (through factorization) and pre-evaluating a reduction.}\label{code:loopnest_rednored}\end{figure}

Pre-evaluation can be seen as the generalization of tensor contraction (Section~\ref{sec:tc}) to a wider class of sub-expressions. We know that multilinear forms can be seen as sums of monomials, each monomial being an integral over the equation domain of products (of derivatives) of functions from discrete spaces. A monomial can always be reduced to the product between a ``reference'' and a ``geometry'' tensor. In our model, a reference tensor is simply represented by one or more sub-expressions independent of $L_e$, exposed after particular transformations of the expression tree. This leads to the following algorithm. 

\noindent\rule[0.01ex]{\linewidth}{0.7pt}

\begin{Algo}[Pre-evaluation]
\label{algo:pre-evaluation}
\normalfont
Consider a finite element integration loop nest $\Lambda = [L_e, L_i, L_j, L_k]$. We dissect the normal form input expression into distinct sub-expressions, each of them representing a monomial. Each sub-expression is then factorized so as to split constants from $[L_i, L_j, L_k]$-dependent terms. This transformation is feasible\footnote{For reasons of space, we omit the detailed sequence of steps (e.g., expansion, factorization), which is however available at \url{https://github.com/coneoproject/COFFEE/blob/master/coffee/optimizer.py} in \citeN{fabio_luporini_2016_49279}.}, as a consequence of the results in~\citeN{Kirby:TC}. These $[L_i, L_j, L_k]$-dependent terms are hoisted outside of $\Lambda$ and stored into temporaries. As part of this process, the reduction induced by $L_i$ is computed by means of symbolic execution. Finally, $L_i$ is removed from $\Lambda$. 
\end{Algo}

\noindent\rule[1.0ex]{\linewidth}{0.7pt}

The pre-evaluation of a monomial introduces some critical issues:
\begin{enumerate}
\item Depending on the complexity of a monomial, a certain number, $t$, of temporary variables is required if pre-evaluation is performed. Such temporary variables are actually $n$-dimensional arrays of size $S$, with $n$ and $S$ being, respectively, the arity and the extent (iteration space size) of the multilinear loop nest (e.g., $n=2$ and $S = J K$ in the case of bilinear forms). For certain values of ${<}t, n, S{>}$, pre-evaluation may dramatically increase the working set, which may be counter-productive for actual execution time.
\item The transformations exposing $[L_i, L_j, L_k]$-dependent terms increase the arithmetic complexity of the expression (e.g., expansion tends to increase the operation count). This could outweigh the gain due to pre-evaluation.
\item A strategy for coordinating sharing elimination and pre-evaluation is needed. We observe that sharing elimination inhibits pre-evaluation, whereas pre-evaluation could expose further sharing elimination opportunities.
\end{enumerate}

We expand on point (1) in the next section, while we address points (2) and (3) in Section~\ref{sec:optimal-synthesis}. 

\subsection{Memory constraints}
\label{sec:mem-const}
We have just observed that the code motion induced by monomial pre-evaluation may dramatically increase the working set size. Even more aggressive code motion strategies are theoretically conceivable. Imagine $\Lambda$ is enclosed in a time stepping loop. One could think of exposing (through some transformations) and hoisting time-invariant sub-expressions for minimizing redundant computation at each time step. The working set size would then increase by a factor $E$, and since $E \gg I, J, K$, the gain in operation count would probably be outweighed, from a runtime viewpoint, by a much larger memory pressure.

Since, for certain forms and discretizations, hoisting may cause the working set to exceed the size of some level of local memory (e.g. the last level of private cache on a conventional CPU, the shared memory on a GPU), we introduce the following \textit{memory constraints}.

\begin{Const}
\label{const:Le}
The size of a temporary due to code motion must not be proportional to the size of $L_e$.
\end{Const}

\begin{Const}
\label{const:TH}
The total amount of memory occupied by the temporaries due to code motion must not exceed a certain threshold, \texttt{$T_H$}.
\end{Const}

Constraint~\ref{const:Le} is a policy decision that the compiler should not silently consume memory on global data objects. It has the effect of shrinking the transformation space. Constraint~\ref{const:TH} has both theoretical and practical implications, which will be carefully analyzed in the next sections.

\section{Selection and composition of transformations}
\label{sec:optimal-synthesis}
In this section, we build a transformation algorithm that, given a memory bound, systematically reaches a local optimum for finite element integration loop nests. 

\subsection{Transformation algorithm}
We address the two following issues: 
\begin{enumerate}
\item \textit{Coordination of pre-evaluation and sharing elimination.} Recall from Section~\ref{sec:pre-evaluation} that pre-evaluation could either increase or decrease the operation count in comparison with that achieved by sharing elimination.
\item \textit{Optimizing over composite operations.} Consider a form comprising two monomials $m_1$ and $m_2$. Assume that pre-evaluation is profitable for $m_1$ but not for $m_2$, and that $m_1$ and $m_2$ share at least one term (for example some basis functions). If pre-evaluation were applied to $m_1$, sharing between $m_1$ and $m_2$ would be lost. We then need a mechanism to understand which transformation -- pre-evaluation or sharing elimination -- results in the highest operation count reduction when considering the whole set of monomials (i.e., the expression as a whole).
\end{enumerate}

Let $\theta : M \rightarrow \mathbb{Z}$ be a cost function that, given a monomial $m \in M$, returns the gain/loss achieved by pre-evaluation over sharing elimination. In particular, we define $\theta(m) = \mathrm{\theta}^{pre}(m) - \mathrm{\theta}^{se}(m)$, where $\mathrm{\theta}^{se}$ and $\mathrm{\theta}^{pre}$ represent the operation counts resulting from applying sharing elimination and pre-evaluation, respectively. Thus pre-evaluation is profitable for $m$ if and only if $\theta(m) < 0$. We return to the issue of deriving $\mathrm{\theta}^{se}$ and $\mathrm{\theta}^{pre}$ in Section~\ref{sec:op_count}. Having defined $\theta$, we can now describe the transformation algorithm (Algorithm~\ref{algo:gamma}).

\noindent\rule[0.01ex]{\linewidth}{0.7pt}

\begin{Algo}[Transformation algorithm]
\label{algo:gamma}
\normalfont
The algorithm has three main phases: initialization (step 1); determination of the monomials preserving the memory constraints that should be pre-evaluated (steps 2-4); application of pre-evaluation and sharing elimination (step 5).
\begin{enumerate}
\item Perform a depth-first visit of the expression tree and determine the set of monomials $M$. Let $S$ be the subset of monomials $m$ such that $\theta(m) > 0$. The set of monomials that will \textit{potentially} be pre-evaluated is $P = M \setminus S$. \\ \textit{Note: there are two fundamental reasons for not pre-evaluating $m_1 \in P$ straight away: 1) the potential presence of spatial sharing between $m_1$ and $m_2 \in S$, which impacts the search for the global optimum; 2) the risk of breaking Constraint~\ref{const:TH}.}
\item Build the set $B$ of all possible bipartitions of $P$. Let $D$ be the dictionary that will store the operation counts of different alternatives.
\item Discard $b = (b_S, b_P) \in B$ if the memory required after applying pre-evaluation to the monomials in $b_P$ exceeds $T_H$ (see Constraint~\ref{const:TH}); otherwise, add $D[b] = \mathrm{\theta}^{se}(S \cup b_S) + \mathrm{\theta}^{pre}(b_P)$. \\ \textit{Note: $\mathbb{B}$ is in practice very small, since even complex forms usually have only a few monomials. This pass can then be accomplished rapidly as long as the cost of calculating $\mathrm{\theta}^{se}$ and $\mathrm{\theta}^{pre}$ is negligible. We elaborate on this aspect in Section~\ref{sec:op_count}.}
\item Take $\argmin_b D[b]$.
\item Apply pre-evaluation to all monomials in $b_P$. Apply sharing elimination to all resulting expressions. \\ \textit{Note: because of the reuse of basis functions, pre-evaluation may produce some identical tables, which will be mapped to the same temporary variable. Sharing elimination is therefore transparently applied to all expressions, including those resulting from pre-evaluation.}
\end{enumerate}
\end{Algo}

\noindent\rule[1.0ex]{\linewidth}{0.7pt}

The output of the transformation algorithm is provided in Figure~\ref{code:loopnest-opt}, assuming as input the loop nest in Figure~\ref{code:loopnest}. 

\begin{figure}[h]\begin{CenteredBox}
\lstinputlisting[basicstyle=\footnotesize\ttfamily]{listings/loopnest_opt.code}
\end{CenteredBox}\caption{The loop nest produced by the algorithm for an input as in Figure~\ref{code:loopnest}.}\label{code:loopnest-opt}\end{figure}

\subsection{The cost function $\theta$}
\label{sec:op_count}
We tie up the remaining loose end: the construction of the cost function $\theta$.

We recall that $\theta(m) = \mathrm{\theta}^{se}(m) - \mathrm{\theta}^{pre}(m)$, with $\mathrm{\theta}^{se}$ and $\mathrm{\theta}^{pre}$ representing the operation counts after applying sharing elimination and pre-evaluation. Since $\theta$ is deployed in a working compiler, simplicity and efficiency are essential characteristics. In the following, we explain how to derive these two values.

The most trivial way of evaluating $\mathrm{\theta}^{se}$ and $\mathrm{\theta}^{pre}$ would consist of applying the actual transformations and simply count the number of operations. This would be tolerable for $\mathrm{\theta}^{se}$, as Algorithm~\ref{algo:sharing-elimination} tends to have negligible cost. However, the overhead would be unacceptable if we applied pre-evaluation -- in particular, symbolic execution -- to all bipartitions analyzed by Algorithm~\ref{algo:gamma}. We therefore seek an analytic way of determining $\mathrm{\theta}^{pre}$.

The first step consists of estimating the \textit{increase factor}, $\iota$. This number captures the increase in arithmetic complexity due to the transformations exposing pre-evaluation opportunities. For context, consider the example in Figure~\ref{code:increase_factor}. One can think of this as the (simplified) loop nest originating from the integration of the action of a mass matrix. The sub-expression \texttt{$f_0$*$B_{i0}$+$f_1$*$B_{i1}$+$f_2$*$B_{i2}$} represents the coefficient $f$ over (tabulated) basis functions (array $B$). In order to apply pre-evaluation, the expression needs be transformed to separate $f$ from all $[L_i, L_j, L_k]$-dependent quantities (see Algorithm~\ref{algo:pre-evaluation}). By product expansion, we observe an increase in the number of $[L_j, L_k]$-dependent terms of a factor $\iota = 3$.

\begin{figure}[h]\begin{CenteredBox}
\lstinputlisting[basicstyle=\footnotesize\ttfamily]{listings/loopnest_inc_factor.code}
\end{CenteredBox}\caption{Simplified loop nest for a pre-multiplied mass matrix.}\label{code:increase_factor}\end{figure}

In general, however, determining $\iota$ is not so straightforward since redundant tabulations may result from common sub-expressions. Consider the previous example. One may add one coefficient in the same function space as $f$, repeat the expansion, and observe that multiple sub-expressions (e.g., $b_{10}*b_{01}*...$ and $b_{01}*b_{10}*...$) will reduce to identical tables. To evaluate $\iota$, we then use combinatorics. We calculate the $k$-combinations with repetitions of $n$ elements, where: (i) $k$ is the number of (derivatives of) coefficients appearing in a product; (ii) $n$ is the number of unique basis functions involved in the expansion. In the original example, we had $n=3$ (for $b_{i0}$, $b_{i1}$, and $b_{i2}$) and $k=1$, which confirms $\iota=3$. In the modified example, there are two coefficients, so $k=2$, which means $\iota=6$.

If $\iota \geq I$ (the extent of the reduction loop), we already know that pre-evaluation will not be profitable. Intuitively, this means that we are introducing more operations than we are saving from pre-evaluating $L_i$. If $\iota < I$, we still need to find the number of terms $\rho$ such that $\mathrm{\theta}^{pre} = \rho \cdot \iota$. The mass matrix monomial in Figure~\ref{code:increase_factor} is characterized by the dot product of test and trial functions, so trivially $\rho = 1$. In the example in Figure~\ref{code:poisson}, instead, we have $\rho = 3$ after a suitable factorization of basis functions. In general, therefore, $\rho$ depends on both form and discretization. To determine this parameter, we look at the re-factorized expression (as established by Algorithm~\ref{algo:pre-evaluation}), and simply count the terms amenable to pre-evaluation.

\section{Formalization}
\label{sec:proof}
We demonstrate that the orchestration of sharing elimination and pre-evaluation performed by the transformation algorithm guarantees local optimality (Definition~\ref{def:mln-quasi-optimality}). The proof re-uses concepts and explanations provided throughout the paper, as well as the terminology introduced in Section~\ref{sec:se-algo}.

\begin{Prop}
\label{prop:optimal-approach}
Consider a multilinear form comprising a set of monomials $M$, and let $\Lambda$ be the corresponding finite element integration loop nest. Let $\Gamma$ be the transformation algorithm. Let $X$ be the set of monomials that, according to $\Gamma$, need to be pre-evaluated, and let $Y = M \setminus X$. Assume that the pre-evaluation of different monomials does not result in identical tables. Then, $\Lambda' = \Gamma(\Lambda)$ is a local optimum in the sense of Definition~\ref{def:mln-quasi-optimality} and satisfies Constraint~\ref{const:TH}.
\end{Prop}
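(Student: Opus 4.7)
The plan is to verify the two requirements of Definition~\ref{def:mln-quasi-optimality} in turn, and then to observe that Constraint~\ref{const:TH} follows as an immediate consequence of the bipartition filter in step~(3) of Algorithm~\ref{algo:gamma}.

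First I would show that $\Gamma$ is expressed as a composition of flop-decreasing transformations. The transformation algorithm performs two kinds of rewrites: sharing elimination on the monomials in $Y \cup b_S$, and pre-evaluation on the monomials in $X = b_P$. For the former, every substep of Algorithm~\ref{algo:sharing-elimination} is flop-decreasing by construction: steps~(2) and~(4) explicitly guard expansion and factorization with a local flop check, the ILP in step~(5) minimizes the number of retained factorizations on the sharing graph, and the rescheduling pass in step~(6) selects the more profitable of Strategies~\ref{strategy:i} and~\ref{strategy:ii} and therefore never inflates the count. For pre-evaluation, the flop-decreasing property is not automatic per monomial, but is enforced by Algorithm~\ref{algo:gamma}: a monomial $m$ lands in $X$ only if it belongs to the bipartition $b^{\star} = \argmin_b D[b]$, and the dictionary $D$ is constructed so that choosing $b^{\star}$ dominates the pure sharing-elimination baseline $\mathrm{\theta}^{se}(M)$ obtained by taking $b_P = \emptyset$ (which is itself one of the bipartitions examined).

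Next I would argue that the operation count in the innermost loops is minimal. For $m \in X$, pre-evaluation removes $L_i$ entirely and replaces the inner reduction with a lookup into a pre-tabulated reference tensor, so the innermost $[L_j, L_k]$ contribution coincides with the theoretical minimum attainable by tensor contraction of $m$. For $m \in Y \cup b_S$, sharing elimination is applied; here I invoke Proposition~\ref{prop:multi-struct} to conclude that the expressions are structured along the multilinear loop nest, so Strategies~\ref{strategy:i} and~\ref{strategy:ii} together cover all flop-reducing factorization and code-motion options, and the ILP in step~(5) of Algorithm~\ref{algo:sharing-elimination} selects an innermost-flop-minimizing factorization over the sharing graph. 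Combining the two parts, the overall innermost flop count equals $\mathrm{\theta}^{se}(Y \cup b_S) + \mathrm{\theta}^{pre}(b_P) = \min_b D[b]$, which is minimal over the space of configurations that the algorithm explores.

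The main obstacle I anticipate is justifying the restriction of the bipartition search to $P = M \setminus S$ rather than to all of $M$: one must argue that no monomial with $\theta(m) > 0$ should ever be pre-evaluated in an optimal synthesis. This would fail if pre-evaluating two distinct monomials produced identical reference tables, because a shared pre-evaluated temporary would reduce flops beyond what the per-monomial $\theta$ accounts for; but this is exactly the hypothesis ruled out in the proposition. Granted that assumption, the cost is additive across the partition and each monomial in $S$ strictly prefers sharing elimination, so discarding $S$ at the outset of the bipartition search is without loss. Finally, Constraint~\ref{const:TH} holds because step~(3) of Algorithm~\ref{algo:gamma} discards every $b$ whose pre-evaluated tables would exceed $T_H$; since $b^{\star}$ survives this filter, so does the synthesized $\Lambda'$.
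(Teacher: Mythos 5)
Your proposal is correct and follows essentially the same route as the paper's own proof: it reduces the claim to (i) the optimality of the pre-evaluation/sharing-elimination coordination via the exactness of $\theta$ and the exhaustive bipartition search, invoking the no-identical-tables hypothesis exactly where the paper does (to rule out cross-monomial table reuse invalidating the per-monomial cost analysis), (ii) the local optimality of sharing elimination via Proposition~\ref{prop:multi-struct} and the ILP, and (iii) the step-(3) filter for Constraint~\ref{const:TH}. Your explicit check that the empty bipartition is among those examined, which makes the flop-decreasing requirement of Definition~\ref{def:mln-quasi-optimality} immediate, is a small but welcome addition that the paper leaves implicit.
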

\begin{proof}
We first observe that the cost function $\theta$ predicts the \textit{exact} gain/loss in monomial pre-evaluation, so $X$ and $Y$ can actually be constructed.

Let $c_\Lambda$ denote the operation count for $\Lambda$ and let $\Lambda_I \subset \Lambda$ be the subset of innermost loops (all $L_k$ loops in Figure~\ref{code:loopnest-opt}). We need to show that there is no other synthesis $\Lambda''_{I}$ satisfying Constraint~\ref{const:TH} such that $c_{\Lambda''_I} < c_{\Lambda'_I}$. This holds if and only if
\begin{enumerate}
\item \textit{The coordination of pre-evaluation with sharing elimination is optimal}. This boils down to prove that
\begin{enumerate}
\item \textit{pre-evaluating any $m \in Y$ would result in $c_{\Lambda''_I} > c_{\Lambda'_I}$}
\item \textit{not pre-evaluating any $m \in X$ would result in $c_{\Lambda''_I} > c_{\Lambda'_I}$}
\end{enumerate}
\item \textit{Sharing elimination leads to a (at least) local optimum.}\\
\end{enumerate}

We discuss these points separately

\begin{enumerate}
\item 
\begin{enumerate}
\item Let $T_m$ represent the set of tables resulting from applying pre-evaluation to a monomial $m$. Consider two monomials $m_1, m_2 \in Y$ and the respective sets of pre-evaluated tables, $T_{m_{1}}$ and $T_{m_{2}}$. If $T_{m_{1}} \cap\ T_{m_{2}} \neq \emptyset$, at least one table is assignable to the same temporary. $\Gamma$, therefore, may not be optimal, since $\theta$ only distinguishes monomials in ``isolation''. We neglect this scenario (see assumptions) because of its purely pathological nature and its -- with high probability -- negligible impact on the operation count.
\item Let $m_1 \in X$ and $m_2 \in Y$ be two monomials sharing some generic multilinear symbols. If $m_1$ were carelessly pre-evaluated, there may be a potential gain in sharing elimination that is lost, potentially leading to a non-optimum. This situation is prevented by construction, because $\Gamma$ exhaustively searches all possible bipartitions on order to determine an optimum which satisfies Constraint~\ref{const:TH}\footnote{Note that the problem can be seen as an instance of the well-known Knapsack problem}. Recall that since the number of monomials is in practice very small, this pass can rapidly be accomplished.
\end{enumerate}
\item Consider Algorithm~\ref{algo:sharing-elimination}. Proposition~\ref{prop:multi-struct} ensures that there are only two ways of scheduling the multilinear operands in $P \in \mathbb{P}$: through generalized code motion (Strategy~\ref{strategy:i}) or factorization of multilinear symbols (via Strategy~\ref{strategy:ii}). If applied, these two strategies would lead, respectively, to performing $|P|$ and $|S_P|$ multiplications at every loop iteration. Since Strategy~\ref{strategy:i} is applied if and only if $|P| < |S_P|$ and does not change the structure of the expression (it requires neither expansion nor factorization), step (1) cannot prune the optimum from the search space. 

After structuring the sharing graph $G$ in such a way that only flop-decreasing transformations are possible, the ILP model is instantiated. At this point, proving optimality reduces to establishing the correctness of the model, which is relatively straightforward because of its simplicity. The model aims to minimize the operation count by selecting the most promising factorizations. The second set of constraints is to select all edges (i.e., all multiplications), exactly once. The first set of inequalities allows multiplications to be scheduled: once a vertex $s$ is selected (i.e., once a symbol is decided to be factorized), all multiplications involving $s$ can be grouped. 
\end{enumerate}
\end{proof}

Throughout the paper we have reiterated the claim that Algorithm~\ref{algo:gamma} achieves a globally optimal flop count if stronger preconditions on the input variational form are satisfied. We state here these preconditions, in increasing order of complexity.
\begin{enumerate}
\item There is a single monomial and only a specific coefficient (e.g., the coordinates field). This is by far the simplest scenario, which requires no particular transformation at the level of the outer loops, so optimality naturally follows.
\item There is a single monomial, but multiple coefficients are present. Optimality is achieved if and only if all sub-expressions depending on coefficients are structured (see Section~\ref{sec:se-rln}). This avoids ambiguity in factorization, which in turn guarantees that the output of step (7) in Algorithm~\ref{algo:sharing-elimination} is optimal.
\item There are multiple monomials, but either at most one coefficient (e.g., the coordinates field) or multiple coefficients not inducing sharing across different monomials are present. This reduces, respectively, to cases (1) and (2) above.
\item There are multiple monomials, and coefficients are shared across monomials. Optimality is reached if and only if the coefficient-dependent sub-expressions produced by Algorithm~\ref{algo:sharing-elimination} -- that is, the by-product of factorizing test/trial functions from distinct monomials -- preserve structure. 
\end{enumerate}

\section{Code Generation}
\label{sec:codegen}
Sharing elimination and pre-evaluation, as well as the transformation algorithm, have been implemented in COFFEE, the compiler for finite element integration routines adopted in Firedrake. In this section, we briefly discuss the aspects of the compiler that are relevant for this article.

\subsection{Expressing transformations through the COFFEE language}
COFFEE implements sharing elimination and pre-evaluation by composing building block transformation operators, which we refer to as \emph{rewrite operators}. This has several advantages. The first is extensibility. New transformations, such as sum factorization in spectral methods, could be expressed by composing the existing operators, or with small effort building on what is already available. Second, generality: COFFEE can be seen as a lightweight, low level computer algebra system, not necessarily tied to finite element integration. Third, robustness: the same operators are exploited, and therefore tested, by different optimization pipelines. The rewrite operators, whose (Python) implementation is based on manipulation of abstract syntax trees (ASTs), comprise the COFFEE language. A non-exhaustive list of such operators includes expansion, factorization, re-association, generalized code motion.

\subsection{Independence from form compilers}
COFFEE aims to be independent of the high level form compiler. It provides an interface to build generic ASTs and only expects expressions to be in normal form (or sufficiently close to it). For example, Firedrake has transitioned from a version of the FEniCS Form Compiler \cite{FFC-TC} modified to produce ASTs rather than strings, to a newly written compiler\footnote{TSFC, the two-stage form compiler \url{https://github.com/firedrakeproject/tsfc}}, while continuing to emply COFFEE. Thus, COFFEE decouples the mathematical manipulation of a form from code optimization; or, in other words, relieves form compiler developers of the task of fine scale loop optimization of generated code.

\subsection{Handling block-sparse tables}
\label{sec:zeros}
For several reasons, basis function tables may be block-sparse (e.g., containing zero-valued columns). For example, the FEniCS Form Compiler implements vector-valued functions by adding blocks of zero-valued columns to the corresponding tabulations; this extremely simplifies code generation (particularly, the construction of loop nests), but also affects the performance of the generated code due to the execution of ``useless'' flops (e.g., operations like \texttt{a + 0}). In~\citeN{quadrature1}, a technique to avoid iteration over zero-valued columns based on the use of indirection arrays (e.g. \texttt{A[B[i]]}, in which \texttt{A} is a tabulated basis function and \texttt{B} a map from loop iterations to non-zero columns in A) was proposed. This technique, however, produces non-contiguous memory loads and stores, which nullify the potential benefits of vectorization. COFFEE, instead, handles block-sparse basis function tables by restructuring loops in such a manner that low level optimization (especially vectorization) is only marginally affected. This is based on symbolic execution of the code, which enables a series of checks on array indices and loop bounds which determine the zero-valued blocks which can be skipped without affecting data alignment.

\section{Performance Evaluation}
\label{sec:perf-results}

\subsection{Experimental setup}

Experiments were run on a single core of an Intel I7-2600 (Sandy Bridge) CPU, running at 3.4GHz, 32KB L1 cache (private), 256KB L2 cache (private) and 8MB L3 cache (shared). The Intel Turbo Boost and Intel Speed Step technologies were disabled. The Intel \texttt{icc 15.2} compiler was used. The compilation flags used were \texttt{-O3, -xHost}. The compilation flag \texttt{xHost} tells the Intel compiler to generate efficient code for the underlying platform.

The Zenodo system was used to archive all packages used to perform the experiments: Firedrake \cite{lawrence_mitchell_2016_49284}, PETSc \cite{barry_smith_2016_49285}, petsc4py \cite{firedrake_2016_49283}, FIAT \cite{marie_e_rognes_2016_49280}, UFL \cite{martin_sandve_alnaes_2016_49282}, FFC \cite{anders_logg_2016_49276}, PyOP2 \cite{florian_rathgeber_2016_49281} and COFFEE \cite{fabio_luporini_2016_49279}. The experiments can be reproduced using a publicly available benchmark suite~\cite{florian_rathgeber_2016_49290}.

We analyze the execution time of four real-world bilinear forms of increasing complexity, which comprise the differential operators that are most common in finite element methods. In particular, we study the mass matrix (``\texttt{Mass}'') and the bilinear forms arising in a Helmholtz equation (``\texttt{Helmholtz}''), in an elastic model (``\texttt{Elasticity}''), and in a hyperelastic model (``\texttt{Hyperelasticity}''). The complete specification of these forms is made publicly available\footnote{\url{https://github.com/firedrakeproject/firedrake-bench/blob/experiments/forms/firedrake_forms.py}}. 

We evaluate the speed-ups achieved by a wide variety of transformation systems over the ``original'' code produced by the FEniCS Form Compiler (i.e., no optimizations applied). We analyze the following transformation systems:
\begin{description}
\item[quad] Optimized quadrature mode. Work presented in~\citeN{quadrature1}, implemented in  in the FEniCS Form Compiler. 
\item[tens] Tensor contraction mode. Work presented in~\citeN{FFC-TC}, implemented in the FEniCS Form Compiler.
\item[auto] Automatic choice between \texttt{tens} and \texttt{quad} driven by heuristic (detailed in~\citeN{Fenics} and summarized in Section~\ref{sec:qualitative}). Implemented in the FEniCS Form Compiler.
\item[ufls] UFLACS, a novel back-end for the FEniCS Form Compiler whose primary goals are improved code generation and execution times.
\item[cfO1] Generalized loop-invariant code motion. Work presented in~\citeN{Luporini}, implemented in COFFEE.
\item[cfO2] Optimal loop nest synthesis with handling of block-sparse tables. Work presented in this article, implemented in COFFEE.
\end{description}

The values that we report are the average of three runs with ``warm cache''; that is, with all kernels retrieved directly from the Firedrake's cache, so code generation and compilation times are not counted. The timing includes however the cost of both local assembly and matrix insertion, with the latter minimized through the choice of a mesh (details below) small enough to fit the L3 cache of the CPU. 

For a fair comparison, small patches were written to the make \texttt{quad}, \texttt{tens}, and \texttt{ufls} compatible with Firedrake. By executing all simulations in Firedrake, we guarantee that both matrix insertion and mesh iteration have a fixed cost, independent of the transformation system employed. The patches adjust the data storage layout to what Firedrake expects (e.g., by generating an array of pointers instead of a pointer to pointers, by replacing flattened arrays with bi-dimensional ones). 

For Constraint~\ref{const:TH}, discussed in Section~\ref{sec:mem-const}, we set $T_H = \operatorname{size}(\mathrm{L2})$; that is, the size of the processor L2 cache (the last level of private cache). When the threshold had an impact on the transformation process, the experiments were repeated with $T_H = \operatorname{size}(\mathrm{L3})$. The results are documented later, individually for each problem.

Following the methodology adopted in~\citeN{quadrature1}, we vary the following parameters:
\begin{itemize}
\item the polynomial degree of test, trial, and coefficient (or ``pre-multiplying'') functions, $q \in \lbrace1, 2, 3, 4\rbrace$
\item the number of coefficient functions $\mathrm{nf} \in \lbrace0, 1, 2, 3\rbrace$
\end{itemize}
While constants of our study are
\begin{itemize}
\item the space of test, trial, and coefficient functions: Lagrange
\item the mesh: tetrahedral with a total of 4374 elements
\item exact numerical quadrature (we employ the same scheme used in~\citeN{quadrature1}, based on the Gauss-Legendre-Jacobi rule)
\end{itemize}

\subsection{Performance results}
\label{sec:perf-results-forms}

\begin{figure}
 \makebox[\textwidth][c]{\includegraphics[scale=0.77]{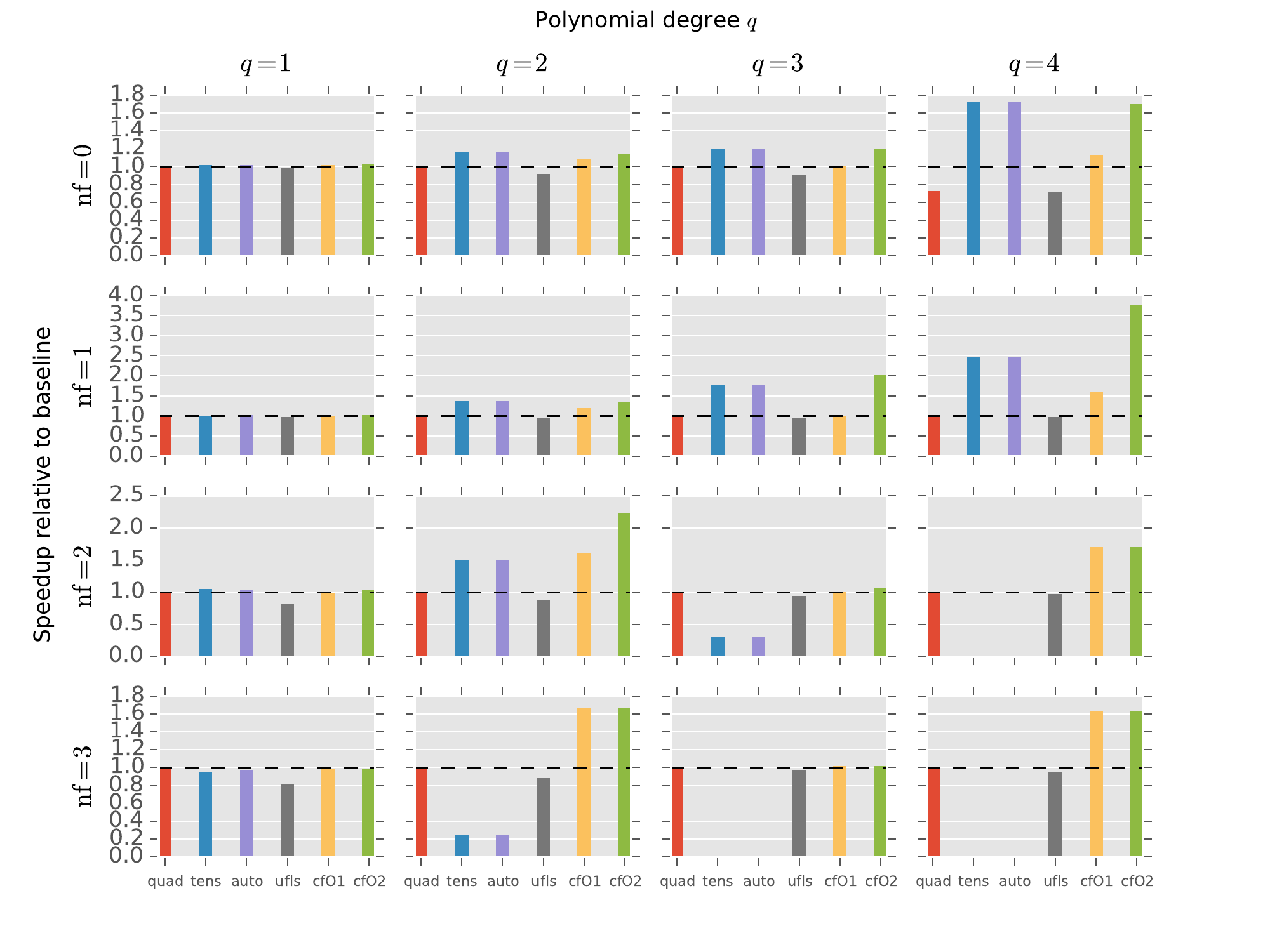}}
\caption{Performance evaluation for the \textit{mass} matrix. The bars represent speed-up over the original (unoptimized) code produced by the FEniCS Form Compiler.}\label{fig:mass}
\end{figure}

\begin{figure}
 \makebox[\textwidth][c]{\includegraphics[scale=0.77]{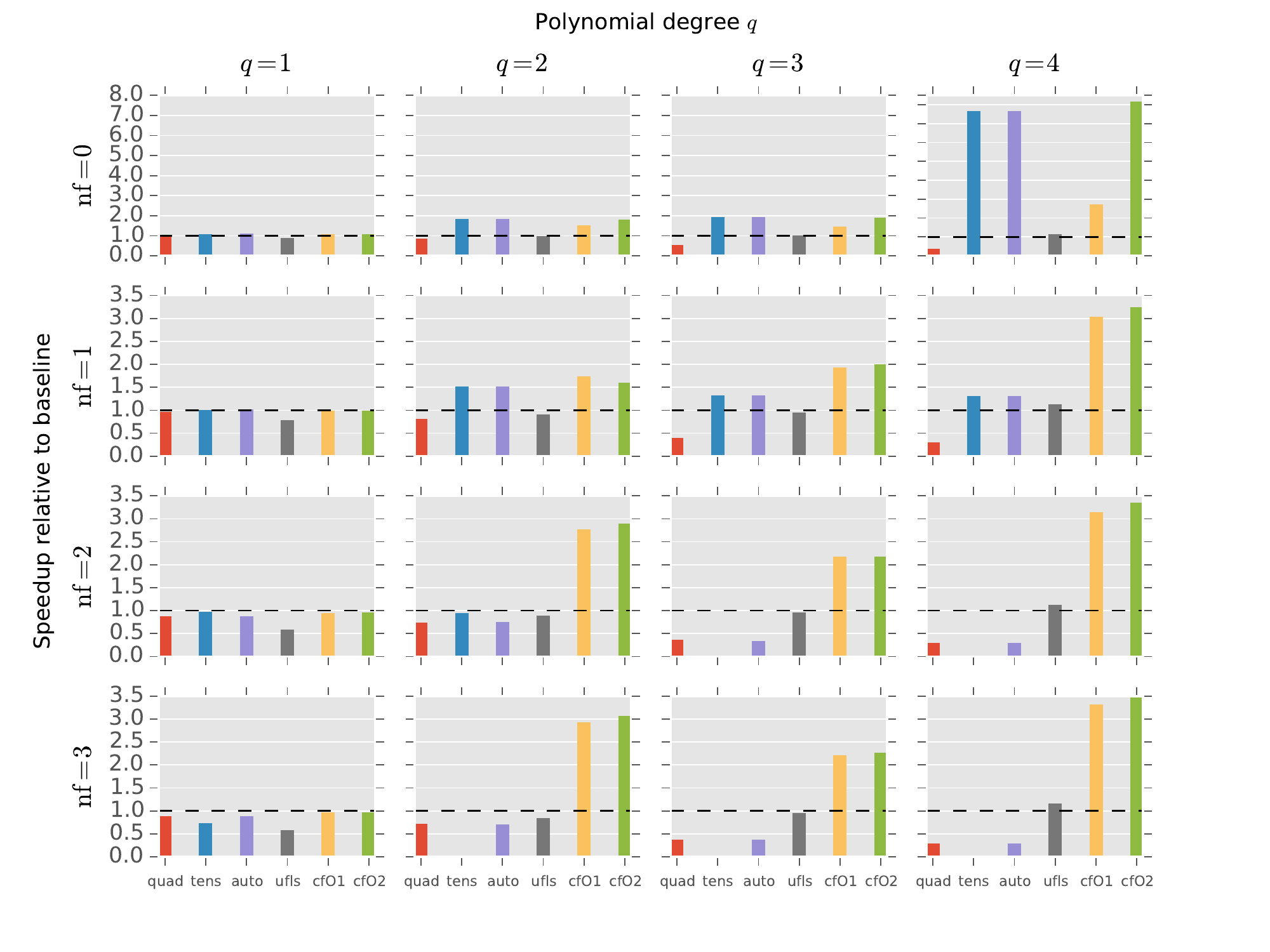}}
\caption{Performance evaluation for the bilinear form of a \textit{Helmholtz} equation. The bars represent speed-up over the original (unoptimized) code produced by the FEniCS Form Compiler.}\label{fig:helmholtz}
\end{figure}

\begin{figure}
\includegraphics[scale=0.77]{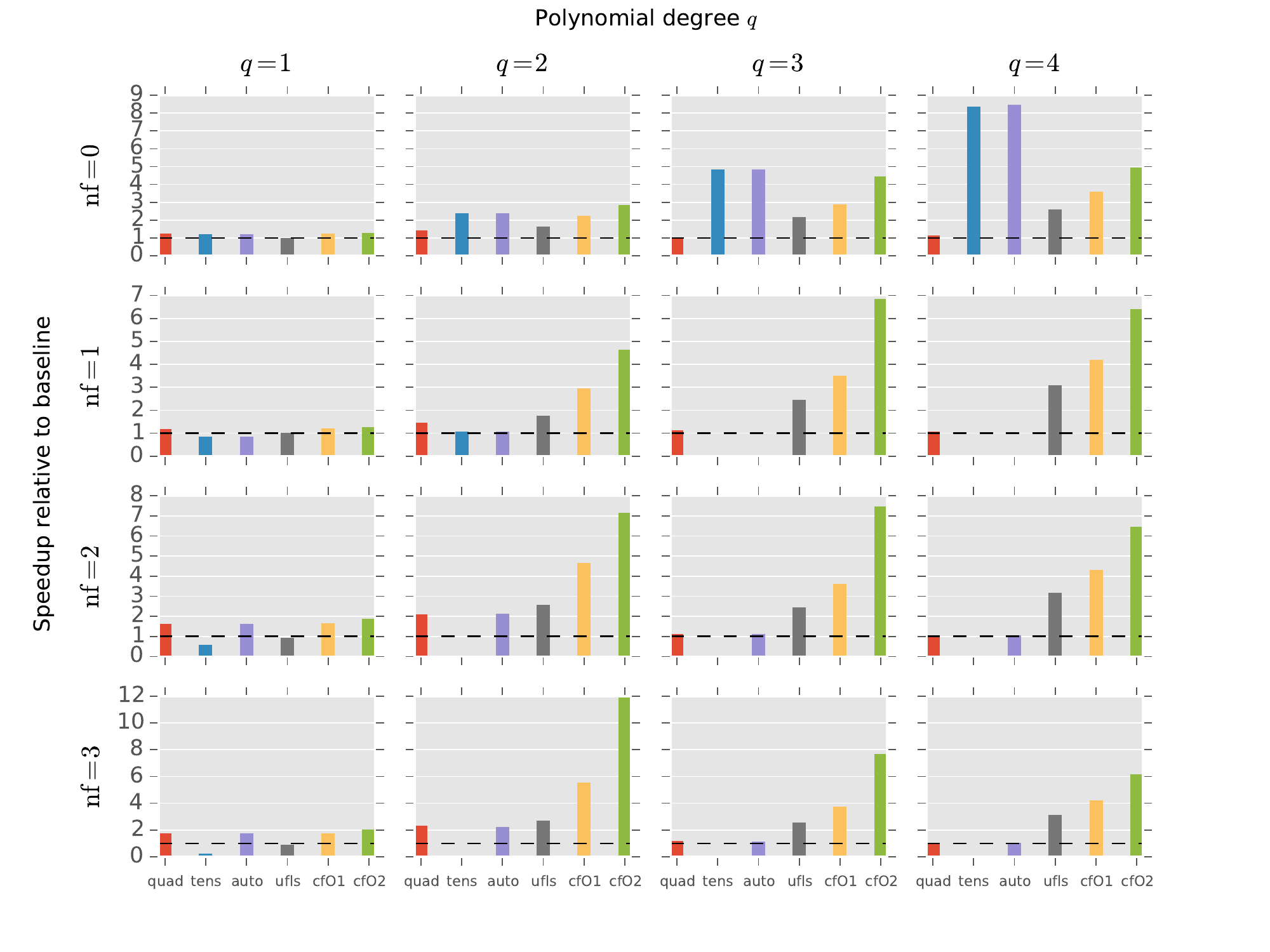}
\caption{Performance evaluation for the bilinear form arising in an \textit{elastic} model. The bars represent speed-up over the original (unoptimized) code produced by the FEniCS Form Compiler.}\label{fig:elasticity}
\end{figure}

\begin{figure}
 \makebox[\textwidth][c]{\includegraphics[scale=0.77]{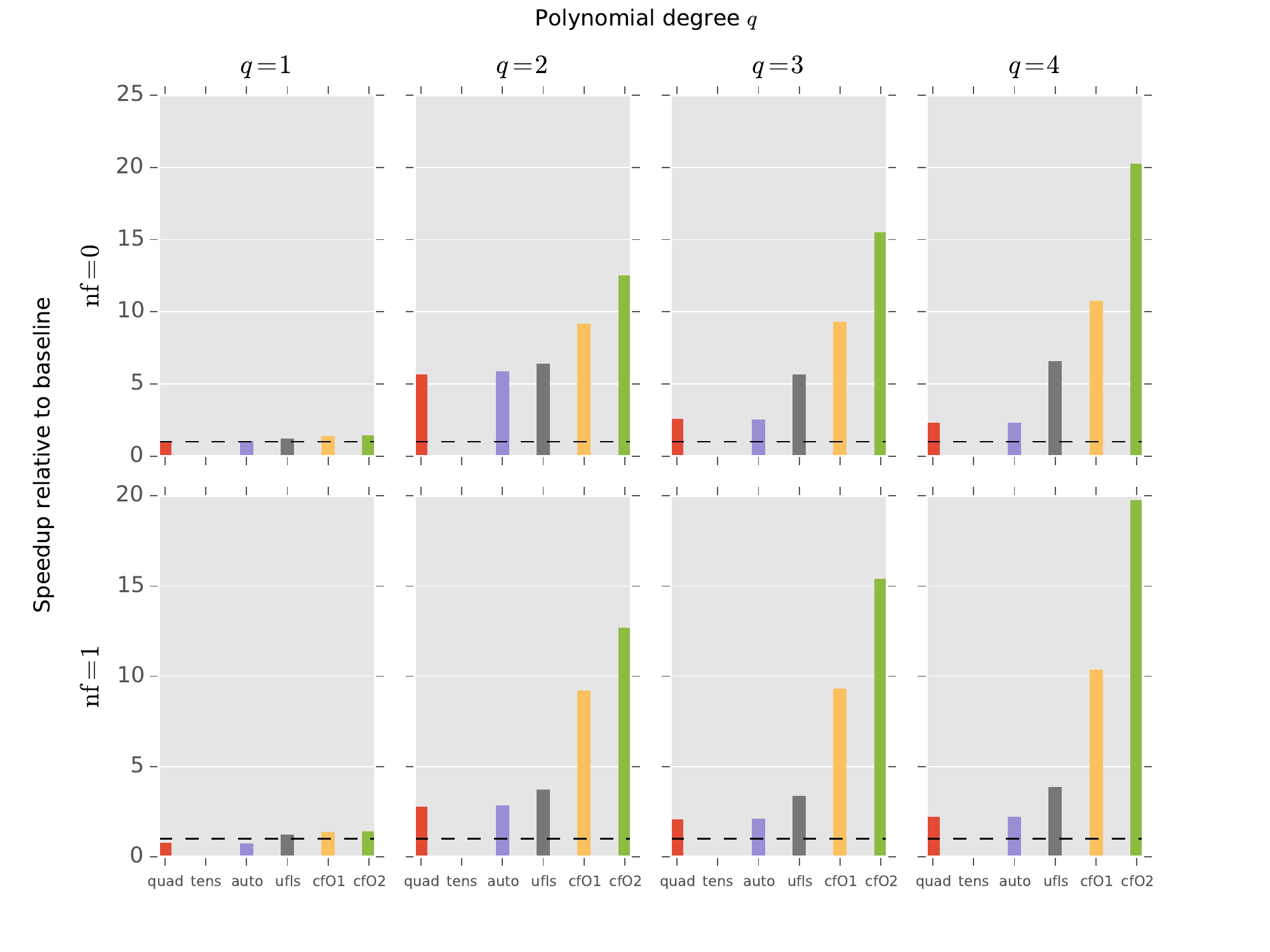}}
\caption{Performance evaluation for the bilinear form arising in a \textit{hyperelastic} model. The bars represent speed-up over the original (unoptimized) code produced by the FEniCS Form Compiler.}\label{fig:hyperelasticity}
\end{figure}

We report the results of our experiments in Figures~\ref{fig:mass},~\ref{fig:helmholtz},~\ref{fig:elasticity}, and~\ref{fig:hyperelasticity} as three-dimensional plots. The axes represent $q$, $\mathrm{nf}$, and code transformation system. We show one subplot for each problem instance ${\langle} \textrm{form}, \mathrm{nf}, q {\rangle}$, with the code transformation system varying within each subplot. The best variant for each problem instance is given by the tallest bar, which indicates the maximum speed-up over non-transformed code. We note that if a bar or a subplot are missing, then the form compiler failed to generate code because it either exceeded the system memory limit or was otherwise unable to handle the form. 

The rest of the section is organized as follows: we first provide insights into the general outcome of the experimentation; we then comment on the impact of a fundamental low-level optimization, namely autovectorization; finally, we motivate, for each form, the performance results obtained.

\paragraph{High level view}
Our transformation strategy does not always guarantee minimum execution time. In particular, about 5$\%$ of the test cases (3 out of 56, without counting marginal differences) show that \texttt{cfO2} was not optimal in terms of runtime. The most significant of such test cases is the elastic model with $[q=4,\ \mathrm{nf}=0]$. There are two reasons for this. First, low level optimization can have a significant impact on the actual performance. For example, the aggressive loop unrolling in \texttt{tens} eliminates operations on zeros and reduces the working set size by not storing entire temporaries; on the other hand, preserving the loop structure can maximize the chances of autovectorization. Second, the transformation strategy adopted when $T_H$ is exceeded plays a key role, as we will later elaborate.

\paragraph{Autovectorization}
 
We chose the mesh dimension and the function spaces such that the inner loop sizes would always be a multiple of the machine vector length. This ensured autovectorization in the majority of code variants\footnote{We verified the vectorization of inner loops by looking at both compiler reports and assembly code.}. The biggest exception is \texttt{quad}, due to the presence of indirection arrays in the generated code. In \texttt{tens}, loop nests are fully unrolled, so the standard loop vectorization is not feasible; the compiler reports suggest, however, that block vectorization \cite{SLP-vect} is often triggered. In \texttt{ufls}, \texttt{cfO1}, and \texttt{cfO2} the iteration spaces have identical structure, with loop vectorization being regularly applied.

\paragraph{Mass matrix}
We start with the simplest of the bilinear forms investigated, the mass matrix. Results are in Figure~\ref{fig:mass}. We first notice that the lack of improvements when $q=1$ is due to the fact that matrix insertion outweighs local assembly. For $q \geq 2$, \texttt{cfO2} generally shows the highest speed-ups. It is worth noting why \texttt{auto} does not always select the fastest implementation: \texttt{auto} always opts for \texttt{tens}, while for $\mathrm{nf} \geq 2$ \texttt{quad} tends to be preferable. On the other hand, \texttt{cfO2} always makes the optimal decision about whether to apply pre-evaluation or not. Surprisingly, despite the simplicity of the form, the performance of the various code generation systems can differ significantly.

\paragraph{Helmholtz}
As in the case of Mass matrix, when $q=1$ the matrix insertion phase is dominant. For $q \geq 2$, the general trend is that \texttt{cfO2} outperforms the competitors. In particular:
\begin{description}
\item[$\mathrm{nf}=0$] pre-evaluation makes \texttt{cfO2} notably faster than \texttt{cfO1}, especially for high values of $q$; \texttt{auto} correctly selects \texttt{tens}, which is comparable to \texttt{cfO2}. 
\item[$\mathrm{nf}=1$] \texttt{auto} picks \texttt{tens}; the choice is however sub-optimal when $q=3$ and $q=4$. This can indirectly be inferred from the large gap between \texttt{cfO2} and \texttt{tens/auto}: \texttt{cfO2} applies sharing elimination, but it correctly avoids pre-evaluation because of the excessive expansion cost.
\item[$\mathrm{nf}=2$ and $\mathrm{nf}=3$] \texttt{auto} reverts to \texttt{quad}, which would theoretically be the right choice (the flop count is much lower than in \texttt{tens}); however, the generated code suffers from the presence of indirection arrays, which break autovectorization and ``traditional'' code motion.
\end{description}

The slow-downs (or marginal improvements) seen in a small number of cases exhibited by \texttt{ufls} can be attributed to the presence of sharing in the generated code.

An interesting experiment we additionally performed was relaxing the memory threshold by setting $T_H = \operatorname{size}(\mathrm{L3})$. We found that this makes \texttt{cfO2} generally slower for $\mathrm{nf} \geq 2$, with a maximum slow-down of 2.16$\times$ with $\langle \mathrm{nf}=2, q=2\rangle$. This effect could be worse when running in parallel, since the L3 cache is shared and different threads would end up competing for the same resource.

\paragraph{Elasticity}
The results for the elastic model are displayed in Figure~\ref{fig:elasticity}. The main observation is that \texttt{cfO2} never triggers pre-evaluation, although in some occasions it should. To clarify this, consider the test case $\langle \mathrm{nf}=0,\ q=2 \rangle$, in which \texttt{tens/auto} show a considerable speed-up over \texttt{cfO2}. \texttt{cfO2} finds pre-evaluation profitable in terms of operation count, although it is eventually not applied to avoid exceeding $T_H$. However, running the same experiments with $T_H = \operatorname{size}(\mathrm{L3})$ resulted in a dramatic improvement, even higher than that obtained by \texttt{tens}. The reason is that, despite exceeding $T_H$ by roughly 40$\%$, the saving in operation count is so large (5$\times$ in this specific problem) that pre-evaluation would in practice be the winning choice. This suggests that our objective function should be improved to handle the cases in which there is a significant gap between potential cache misses and reduction in operation count.

We also note that:
\begin{itemize}
\item the differences between \texttt{cfO2} and \texttt{cfO1} are due to the perfect sharing elimination and the zero-valued blocks avoidance technique presented in Section~\ref{sec:zeros}.
\item when $\mathrm{nf}=1$, \texttt{auto} prefers \texttt{tens} over \texttt{quad}, which leads to sub-optimal operation counts and execution times.
\item \texttt{ufls} often results in better execution times than \texttt{quad} and \texttt{tens}. This is due to multiple factors, including avoidance of indirection arrays, preservation of loop structure, and a more effective code motion strategy.
\end{itemize}

\paragraph{Hyperelasticity}
In the experiments on the hyperelastic model, shown in Figure~\ref{fig:hyperelasticity}, \texttt{cfO2} exhibits the largest gains out of all problem instances considered in this paper. This is a positive result, since it indicates that our transformation algorithm scales well with form complexity. The fact that all code transformation systems (apart from \texttt{tens}) show quite significant speed-ups suggests two points. First, the baseline is highly inefficient. With forms as complex as in the hyperelastic model, a trivial translation of integration routines into code should always be avoided as even the best general-purpose compiler available (the Intel compiler on an Intel platform at maximum optimization level) fails to exploit the structure inherent in the expressions. Second, the strategy for removing spatial and temporal sharing has a tremendous impact. Sharing elimination as performed by \texttt{cfO2} ensures a critical reduction in operation count, which becomes particularly pronounced for higher values of $q$. 

\section{Conclusions}
\label{sec:conclusions}
We have developed a theory for the optimization of finite element integration loop nests. The article details the domain properties which are exploited by our approach (e.g., linearity) and how these translate to transformations at the level of loop nests. All of the algorithms shown in this paper have been implemented in COFFEE, a compiler publicly available fully integrated with the Firedrake framework. The correctness of the transformation algorithm was discussed. The performance results achieved suggest the effectiveness of our methodology. 

\section{Limitations and future work}
\label{sec:completeness}
We have defined sharing elimination and pre-evaluation as high level transformations on top of a specific set of rewrite operators, such as code motion and factorization, and we have used them to construct the transformation space. There are three main limitations in this process. First, we do not have a systematic strategy to optimize sub-expressions which are independent of linear loops. Although we have a mechanism to determine how much computation should be hoisted to the level of the integration (reduction) loop, it is not clear how to effectively improve the heuristics used at step (6) in Algorithm~\ref{algo:sharing-elimination}. Second, lower operation counts may be found by exploiting domain-specific properties, such as redundancies in basis functions; this aspect is completely neglected in this article. Third, with Constraint~\ref{const:Le} we have limited the applicability of code motion. This constraint was essential given the complexity of the problem tackled. 

Another issue raised by the experimentation concerns selecting a proper threshold for Constraint~\ref{const:TH}. To solve this problem would require a more sophisticated cost model, which is an interesting question deserving further research. 

We also identify two additional possible research directions: a complete classification of forms for which a global optimum is achieved; and a generalization of the methodology to other classes of loop nests, for instance those arising in spectral element methods.

\bibliographystyle{ACM-Reference-Format-Journals}

\medskip

\end{document}